\documentclass[pdflatex,sn-mathphys-ay]{sn-jnl}

\usepackage{graphicx}%
\usepackage{multirow}%
\usepackage{amsmath,amssymb,amsfonts}%
\usepackage{amsthm}%
\usepackage{mathrsfs}%
\usepackage[title]{appendix}%
\usepackage{xcolor}%
\usepackage{textcomp}%
\usepackage{manyfoot}%
\usepackage{booktabs}%
\usepackage{algorithm}%
\usepackage{algorithmicx}%
\usepackage{algpseudocode}%
\usepackage{listings}%
\usepackage{adjustbox}
\usepackage{tikz}
\usetikzlibrary{arrows.meta,calc,angles,quotes,positioning}
\definecolor{inkgray}{RGB}{90,90,90}
\definecolor{primary}{RGB}{31,78,140}
\definecolor{secondary}{RGB}{176,58,46}
\definecolor{fillcap}{RGB}{224,236,247}
\definecolor{boundary}{RGB}{120,120,120}

\theoremstyle{thmstyleone}%
\newtheorem{theorem}{Theorem}%
\newtheorem{proposition}[theorem]{Proposition}%

\theoremstyle{thmstyletwo}%

\theoremstyle{thmstylethree}%
\newtheorem{definition}{Definition}%

\begin{document}

\title[CORAL for Fidelity-Constrained Decorrelation]{CORAL: Constrained Oblique Rotation with Anchored Loadings for Fidelity-Constrained Decorrelation}

\author*[1]{\fnm{Lawrence V.} \sur{Fulton}}\email{lawrence.fulton@bc.edu}
\author[2]{\fnm{Christopher P.} \sur{Fulton}}
\author[1]{\fnm{Arvind} \sur{Sharma}}
\author[1]{\fnm{Aleksandar} \sur{Tomi\'{c}}}

\affil*[1]{\orgdiv{Woods College of Advancing Studies}, \orgname{Boston College}, \orgaddress{\street{140 Commonwealth Avenue}, \city{Chestnut Hill}, \state{MA}, \postcode{02467}, \country{USA}}}
\affil[2]{\orgdiv{United States Air Force Test Pilot School}, \orgaddress{\street{220 Wolfe Ave}, \city{Edwards AFB}, \state{CA}, \postcode{93524}, \country{USA}}}

\abstract{Decorrelating a multivariate system need not destroy source-variable identity. We introduce Constrained Oblique Rotation with Anchored Loadings (CORAL), which minimizes residual cross-correlation while guaranteeing a declared minimum correlation between each transformed variable and its designated source. For a $p$-variable correlation matrix $R$, we show that every exact decorrelator can be written as $R^{-1/2}Q$ for some orthogonal matrix $Q$, and define $\rho_\star(R)$ as the maximum common source fidelity compatible with exact decorrelation. Constructive lower bounds and rigorous analytical upper bounds tightly bracket $\rho_\star$ at $[0.972,0.976]$, $[0.959,0.962]$, and $[0.949,0.950]$ in simulations with $p=6,18,50$, respectively, compared with PCA's largest achievable minimum correlation between distinct principal components and matched source variables of $0.358$, $0.329$, and $0.172$. Corresponding intervals are $[0.751,0.758]$ for World Development Indicators and $[0.826,0.835]$ for wine chemistry data sets. Thus, loss of source-variable identity is not inherent to exact decorrelation but depends on the decorrelator selected. CORAL uses constrained Riemannian optimization and extends to exact support restrictions.}

\keywords{decorrelation, principal component analysis, whitening, interpretability, manifold optimization, constrained optimization, sparse transformations}

\maketitle

\section{Introduction}\label{sec:introduction}

Correlated variables complicate interpretation, attribution, and downstream statistical inference. Regression coefficients may become unstable under respecification, variance decompositions may depend on term order, and transformed representations can lose a clear correspondence to the substantively meaningful variables from which they were constructed \citep{belsley1980regression,kutner2005applied}. Principal component analysis (PCA) provides the canonical orthogonal representation of a correlation or covariance structure \citep{pearson1901lines,hotelling1933analysis,jolliffe2002principal}. When all components are retained and rescaled to unit transformed variance, PCA also yields an exactly decorrelated system. Its optimization criterion, however, is not designed to preserve a one-to-one relationship between transformed variables and their original coordinates. Principal components are typically dense mixtures whose loadings need not align with the substantive meaning attached to any designated source variable \citep{chipman2005interpretable,jolliffe2003modified}.

Several established methods address related aspects of this problem. Sparse PCA seeks simpler loading structures \citep{zou2006sparse,dabkowski2005spca,witten2009penalized,chen2024newbasis}, while varimax, promax, and other factor rotations pursue loading simplicity or interpretable factor structure \citep{kaiser1958varimax,hendrickson1964promax,browne2001overview}. ZCA and related whitening transformations are closer to the present objective because they combine exact decorrelation with a global notion of proximity to the original coordinates \citep{kessy2018optimal,gillard2023polynomial,bell1997independent}. Independent component analysis instead targets statistical independence rather than preservation of designated source variables \citep{hyvarinen2000independent,comon1994independent}. None of these approaches directly asks the following question: how much can a system be decorrelated while guaranteeing, variable by variable, that each transformed coordinate remains correlated with a prespecified source coordinate by at least a declared amount?

We introduce \emph{Constrained Oblique Rotation with Anchored Loadings} (CORAL) to answer that question. Let $X$ be a standardized data matrix with variables $X_1,\ldots,X_p$, and let $T\in\mathbb{R}^{p\times p}$ be the transformation matrix. CORAL defines $\widetilde X=XT$ while requiring $\operatorname{Cor}(\widetilde X_j,X_j)\geq\rho_{\min}$ for every $j$. Thus, $\rho_{\min}$ is the minimum correlation retained between each transformed variable and its designated source. CORAL minimizes residual cross-correlation subject to this fidelity requirement and unit transformed variance. A dual formulation instead fixes an acceptable residual-correlation budget and maximizes the common fidelity floor. The framework combines constrained multi-objective optimization, with roots in goal programming \citep{charnes1957management,ignizio1976goal,romero1991handbook,tamiz1998goal}, and Riemannian optimization \citep{edelman1998geometry,absil2008optimization,boumal2023intromanifolds,townsend2016pymanopt}.

The central feature of CORAL is not merely that the transformed variables are interpretable, but that their relationship to the original coordinates is quantitatively preserved. Specifically, CORAL seeks a transformed system $\widetilde X=XT$ in which cross-correlations among the $\widetilde X_j$ are eliminated or reduced while each transformed coordinate remains correlated with its designated source variable, $\operatorname{Cor}(\widetilde X_j,X_j)\geq\rho_{\min}$. The results show that these objectives are not inherently antagonistic. In the simulated systems, every transformed variable can retain at least $0.949$ correlation with its designated source while the transformed system remains exactly decorrelated. Thus, substantial loss of source-coordinate identity is not a necessary consequence of decorrelation.

The principal theoretical contribution is a characterization of the entire family of exact decorrelators. If $R$ is positive definite, every square transformation satisfying $T^\top RT=I$ is $T=R^{-1/2}Q$ for an orthogonal matrix $Q$. This representation shows that PCA is one exact decorrelator among a continuum of possibilities and makes it possible to define $\rho_\star(R)$, the maximum common source fidelity attainable while preserving exact decorrelation. We derive computable lower and upper bounds for $\rho_\star(R)$, including a subset nuclear-norm bound that can be substantially sharper than the full-matrix trace bound. The empirical frontiers can therefore be interpreted relative to a theoretically defined feasibility boundary rather than only by observing when a numerical solver begins to leave residual correlation.

The numerical analysis evaluated simulated correlation matrices at $p=6$, $p=18$, and $p=50$, eight World Development Indicators (WDI) for 137 countries \citep{worldbank2026wdi}, and the 13-variable University of California, Irvine (UCI) Wine dataset with 178 observations \citep{aeberhard1992wine}. The $p=50$ case provides a dense scalability check. PCA is benchmarked using the most favorable sign-invariant one-to-one assignment for each fidelity measure: minimum matched absolute correlation and mean matched absolute correlation. A separate $p=18$ support analysis fixes prohibited transformation coefficients at zero and distinguishes CORAL's aggregate squared-correlation objective from a min--max objective that minimizes the worst residual pair.

\section{CORAL formulation and solution}\label{sec:formulation}

\subsection{Setup and source fidelity}\label{sec:setup}

Let $X\in\mathbb{R}^{n\times p}$ be a standardized data matrix with symmetric positive-definite sample correlation matrix $R\in\mathbb{R}^{p\times p}$. Let $T=[t_1,\ldots,t_p]\in\mathbb{R}^{p\times p}$ be the transformation matrix and define $\widetilde X=XT$. Then
\begin{equation}
\operatorname{Var}(\widetilde X_j)=t_j^\top Rt_j,\qquad
\operatorname{Cov}(\widetilde X_i,\widetilde X_j)=t_i^\top Rt_j.
\label{eq:variance-covariance}
\end{equation}
CORAL imposes unit transformed variance,
\begin{equation}
t_j^\top Rt_j=1,\qquad j=1,\ldots,p,
\label{eq:unitvariance}
\end{equation}
so $T^\top RT$ is the correlation matrix of the transformed variables. Because both $X_j$ and $\widetilde X_j$ have unit variance,
\begin{equation}
\operatorname{Cor}(\widetilde X_j,X_j)=e_j^\top Rt_j,
\label{eq:fidelity}
\end{equation}
where $e_j$ is the $j$th standard basis vector. The source-fidelity constraint is therefore
\begin{equation}
e_j^\top Rt_j\geq\rho_{\min},\qquad j=1,\ldots,p,
\label{eq:fidelity-floor}
\end{equation}
for a declared $\rho_{\min}\in[0,1]$.

Classical goal programming motivates minimizing total absolute off-diagonal correlation by introducing positive and negative deviation variables \citep{charnes1957management,ignizio1976goal}. The implemented CORAL solver instead uses the smooth squared-correlation criterion
\begin{equation}
D_2(T)=\sum_{i<j}(t_i^\top Rt_j)^2
=\frac{1}{2}\left(\lVert T^\top RT\rVert_F^2-p\right),
\label{eq:squared-decorrelation}
\end{equation}
subject to \eqref{eq:unitvariance} and \eqref{eq:fidelity-floor}. The absolute and squared criteria share the same zero-residual solutions, although they can rank nonzero-residual solutions differently. All numerical frontiers reported below correspond to \eqref{eq:squared-decorrelation}.

\subsection{Oblique-manifold reparameterization}
\label{sec:oblique}

The primal CORAL problem combines the squared decorrelation objective with unit-variance and source-fidelity constraints:
\begin{equation}
\begin{aligned}
\min_T\quad
& D_2(T)
=
\sum_{i<j}(t_i^\top Rt_j)^2\\
\text{subject to}\quad
& t_j^\top Rt_j=1,\qquad j=1,\ldots,p,\\
& e_j^\top Rt_j\geq\rho_{\min},\qquad j=1,\ldots,p.
\end{aligned}
\label{eq:primal}
\end{equation}

To enforce unit transformed variance directly, let $R^{1/2}$ denote the
symmetric positive-definite square root of $R$ and define
\begin{equation}
W=R^{1/2}T,\qquad T=R^{-1/2}W.
\label{eq:reparameterization}
\end{equation}
Because
\[
t_j^\top Rt_j=w_j^\top w_j,
\]
the unit-variance constraints become unit-norm constraints on the columns
of $W$. Thus,
\begin{equation}
W\in\mathcal{OB}(p,p)
=
\left\{
W\in\mathbb{R}^{p\times p}:
\operatorname{diag}(W^\top W)=\mathbf{1}
\right\},
\label{eq:oblique}
\end{equation}
where $\mathcal{OB}(p,p)$ is the oblique manifold of $p\times p$
matrices with unit-norm columns.

The same reparameterization gives
\[
T^\top RT=W^\top W
\]
and
\[
e_j^\top Rt_j=e_j^\top R^{1/2}w_j.
\]
Therefore, \eqref{eq:primal} is equivalent to the computational problem
\begin{equation}
\begin{aligned}
\min_{W\in\mathcal{OB}(p,p)}\quad
& \sum_{i<j}(w_i^\top w_j)^2\\
\text{subject to}\quad
& e_j^\top R^{1/2}w_j\geq\rho_{\min},
\qquad j=1,\ldots,p.
\end{aligned}
\label{eq:primal-oblique}
\end{equation}
Optimization on the oblique manifold therefore enforces unit transformed
variance exactly, leaving only the source-fidelity inequalities to be
enforced numerically. Figure~\ref{fig:coral-geometry} illustrates the CORAL math programming problem.

\begin{figure*}[t]
\centering
\begin{tikzpicture}[
    >=Latex,
    every node/.style={font=\small},
    axis/.style={-Latex, inkgray, line width=0.55pt},
    surf/.style={primary, line width=1.05pt},
    contour/.style={primary!55!inkgray, line width=0.45pt},
    backcontour/.style={primary!40!inkgray, dashed, line width=0.40pt},
    feas/.style={primary, line width=1.35pt},
    feas2/.style={secondary, line width=1.35pt},
    feas3/.style={inkgray, line width=1.10pt},
    sel1/.style={primary, -Latex, line width=1.15pt},
    sel2/.style={secondary, -Latex, line width=1.15pt},
    sel3/.style={inkgray, -Latex, line width=1.05pt},
    ghost/.style={inkgray!65, dashed, line width=0.45pt}
]

\begin{scope}[shift={(0,0)}]

  \node[font=\bfseries\small]
    at (0,2.35) {(a) Feasible regions in $t$-space};

  \draw[axis] (0,0) -- (2.25,0)
    node[below right, inner sep=1pt] {$t_1$};
  \draw[axis] (0,0) -- (-1.25,-0.72)
    node[below left, inner sep=1pt] {$t_2$};
  \draw[axis] (0,0) -- (0,1.80)
    node[above, inner sep=1pt] {$t_3$};

  \draw[surf]
    (0,0) ellipse[x radius=1.75,y radius=1.15];

  \draw[backcontour]
    (-1.75,0)
    arc[start angle=180,end angle=360,
        x radius=1.75,y radius=0.32];

  \draw[contour]
    (1.75,0)
    arc[start angle=0,end angle=180,
        x radius=1.75,y radius=0.32];

  \draw[contour]
    (0,0) ellipse[x radius=0.54,y radius=1.15];

  \begin{scope}
    \clip (0,0) ellipse[x radius=1.75,y radius=1.15];

    \fill[fillcap, opacity=0.62]
      (1.00,0.40)
      ellipse[x radius=0.43,y radius=0.23];

    \fill[secondary!18, opacity=0.62]
      (-0.66,0.90)
      ellipse[x radius=0.43,y radius=0.23];

    \fill[inkgray!14, opacity=0.62]
      (-0.55,-0.88)
      ellipse[x radius=0.42,y radius=0.22];
  \end{scope}

  \draw[feas]
    (1.00,0.40)
    ellipse[x radius=0.43,y radius=0.23];

  \draw[feas2]
    (-0.5,0.80)
    ellipse[x radius=0.43,y radius=0.23];

  \draw[feas3]
    (-0.55,-0.88)
    ellipse[x radius=0.42,y radius=0.22];

  \node[primary, font=\scriptsize]
    at (1.5,.9) {$\mathcal T_1$};

  \node[secondary, font=\scriptsize]
    at (-1.05,1.3) {$\mathcal T_2$};

  \node[inkgray, font=\scriptsize]
    at (-1.02,-1.16) {$\mathcal T_3$};

  \node[inkgray, font=\scriptsize]
    at (0,-1.55)
    {$t_j^\top Rt_j=1$};

\end{scope}

\draw[-Latex, inkgray, line width=0.65pt]
  (3,0.65) -- (3.65,0.65)
  node[midway, above, font=\scriptsize]
  {$w_j=R^{1/2}t_j$};

\begin{scope}[shift={(6.1cm,0)}]

  \node[font=\bfseries\small]
    at (0,2.35) {(b) Fidelity caps in $w$-space};

  \draw[axis] (0,0) -- (2.15,0)
    node[below right, inner sep=1pt] {$w_1$};
  \draw[axis] (0,0) -- (-1.20,-0.70)
    node[left, inner sep=1pt] {$w_2$};
  \draw[axis] (0,0) -- (0,1.80)
    node[above, inner sep=1pt] {$w_3$};

  \draw[surf] (0,0) circle[radius=1.42];

  \draw[backcontour]
    (-1.42,0)
    arc[start angle=180,end angle=360,
        x radius=1.42,y radius=0.36];

  \draw[contour]
    (1.42,0)
    arc[start angle=0,end angle=180,
        x radius=1.42,y radius=0.36];

  \draw[contour]
    (0,0) ellipse[x radius=0.50,y radius=1.42];

  \begin{scope}
    \clip (0,0) circle[radius=1.42];

    \fill[fillcap, opacity=0.62]
      (0.98,0.40)
      ellipse[x radius=0.42,y radius=0.23];

    \fill[secondary!18, opacity=0.62]
      (-0.64,0.92)
      ellipse[x radius=0.42,y radius=0.23];

    \fill[inkgray!14, opacity=0.62]
      (-0.54,-0.90)
      ellipse[x radius=0.41,y radius=0.22];
  \end{scope}

  \draw[feas]
    (0.98,0.40)
    ellipse[x radius=0.42,y radius=0.23];

  \draw[feas2]
    (-0.64,0.92)
    ellipse[x radius=0.42,y radius=0.23];

  \draw[feas3]
    (-0.54,-0.90)
    ellipse[x radius=0.41,y radius=0.22];

  \node[primary, font=\scriptsize]
    at (1.5,0.9) {$\mathcal F_1$};

  \node[secondary, font=\scriptsize]
    at (-1.0,1.40) {$\mathcal F_2$};

  \node[inkgray, font=\scriptsize]
    at (-1.00,-1.3) {$\mathcal F_3$};

  \node[inkgray, font=\scriptsize]
    at (0,-1.7)
    {$\|w_j\|_2=1$};

\end{scope}

\begin{scope}[shift={(0,-4.65cm)}]

  \node[font=\bfseries\small]
    at (0,2.35) {(c) Joint constrained orthogonalization};

  \draw[axis] (0,0) -- (2.15,0);
  \draw[axis] (0,0) -- (-1.20,-0.70);
  \draw[axis] (0,0) -- (0,1.80);

  \draw[surf] (0,0) circle[radius=1.42];

  \draw[backcontour]
    (-1.42,0)
    arc[start angle=180,end angle=360,
        x radius=1.42,y radius=0.36];

  \draw[contour]
    (1.42,0)
    arc[start angle=0,end angle=180,
        x radius=1.42,y radius=0.36];

  \draw[contour]
    (0,0) ellipse[x radius=0.50,y radius=1.42];

  \begin{scope}
    \clip (0,0) circle[radius=1.42];

    \fill[fillcap, opacity=0.62]
      (0.98,0.40)
      ellipse[x radius=0.42,y radius=0.23];

    \fill[secondary!18, opacity=0.62]
      (-0.64,0.92)
      ellipse[x radius=0.42,y radius=0.23];

    \fill[inkgray!14, opacity=0.62]
      (-0.54,-0.90)
      ellipse[x radius=0.41,y radius=0.22];
  \end{scope}

  \draw[feas]
    (0.98,0.40)
    ellipse[x radius=0.42,y radius=0.23];

  \draw[feas2]
    (-0.64,0.92)
    ellipse[x radius=0.42,y radius=0.23];

  \draw[feas3]
    (-0.54,-0.90)
    ellipse[x radius=0.41,y radius=0.22];

  \draw[sel1]
    (0,0) -- (1.03,0.37)
    node[inner sep=1pt] {$w_1$};

  \draw[sel2]
    (0,0) -- (-0.60,0.95)
    node[left, inner sep=1pt] {$w_2$};

  \draw[sel3]
    (0,0) -- (-0.50,-0.87)
    node[inner sep=1pt] {$w_3$};

  \coordinate (O) at (0,0);
  \coordinate (Wone) at (1.03,0.37);
  \coordinate (Wtwo) at (-0.60,0.95);
  \coordinate (Wthree) at (-0.50,-0.87);

  \pic[
    draw=primary,
    line width=0.95pt,
    angle radius=5.8mm,
    angle eccentricity=1.32,
    "$\theta_{12}$",
    text=primary,
    font=\scriptsize
  ] {angle=Wone--O--Wtwo};

  \pic[
    draw=secondary,
    line width=0.95pt,
    angle radius=8.0mm,
    angle eccentricity=1.28,
    "$\theta_{23}$",
    text=secondary,
    font=\scriptsize
  ] {angle=Wtwo--O--Wthree};

  \pic[
    draw=inkgray,
    line width=0.95pt,
    angle radius=10.2mm,
    angle eccentricity=1.24,
    "$\theta_{31}$",
    text=inkgray,
    font=\scriptsize
  ] {angle=Wthree--O--Wone};

  \node[primary, font=\scriptsize]
    at (1.5,0.9) {$\mathcal F_1$};

  \node[secondary, font=\scriptsize]
    at (-1.0,1.40) {$\mathcal F_2$};

  \node[inkgray, font=\scriptsize]
    at (-1.00,-1.3) {$\mathcal F_3$};

  \node[inkgray, font=\scriptsize, align=center]
    at (0,-1.76)
    {$\text{Minimize}\quad
      {\color{primary}\cos^2\theta_{12}}
      +{\color{secondary}\cos^2\theta_{23}}
      +{\color{inkgray}\cos^2\theta_{31}}$};

  \node[inkgray, font=\scriptsize, align=center]
    at (0,-2.4)
    {$D_2(W)=\sum_{i<j}(w_i^\top w_j)^2, w_j\in\mathcal F_j$};

\end{scope}

\draw[-Latex, inkgray, line width=0.65pt]
  (3,-4.00) -- (3.65,-4.00)
  node[midway, above, font=\scriptsize]
  {$t_j=R^{-1/2}w_j$};

\begin{scope}[shift={(6.1cm,-4.65cm)}]

  \node[font=\bfseries\small]
    at (0,2.35) {(d) Source-basis interpretation};

  \coordinate (O) at (0,0);
  \coordinate (Eone) at (2.00,0);
  \coordinate (Etwo) at (-1.12,-0.68);
  \coordinate (Ethree) at (0,1.75);
  \coordinate (C) at (0.95,1.05);
  \coordinate (P) at (0.95,0);

  \draw[axis, primary]
    (O) -- (Eone)
    node[below right, inner sep=1pt] {$e_1$};

  \draw[axis]
    (O) -- (Etwo)
    node[left, inner sep=1pt] {$e_2$};

  \draw[axis]
    (O) -- (Ethree)
    node[above, inner sep=1pt] {$e_3$};

  \draw[secondary, -Latex, line width=1.10pt]
    (O) -- (C)
    node[above right, inner sep=1pt] {$c_j=Rt_j$};

  \draw[ghost] (C) -- (P);
  \fill[primary] (P) circle[radius=1.2pt];
  \draw[primary, line width=1.15pt] (O) -- (P);

  \node[primary, font=\scriptsize]
    at (1.35,-0.48)
    {$c_{jj}\geq\rho_{\min}$};

\end{scope}

\end{tikzpicture}

\caption{Geometric interpretation of CORAL. (a) Each transformation column has a coordinate-specific feasible region $\mathcal T_j$ on the unit-variance ellipsoid. (b) Under $w_j=R^{1/2}t_j$, these regions map to fidelity caps $\mathcal F_j$ on the unit hypersphere. (c) CORAL jointly selects $w_j\in\mathcal F_j$ to minimize residual correlation; exact decorrelation occurs when the caps admit an orthogonal frame. (d) Mapping back gives $c_j=Rt_j$, whose entries are correlations with the original variables, while the designated coordinate satisfies $c_{jj}=e_j^\top Rt_j\geq\rho_{\min}$. Projected contours are schematic.}
\label{fig:coral-geometry}
\end{figure*}

Define
$
g_j(W)=\rho_{\min}-e_j^\top R^{1/2}w_j\leq0.
$
The Augmented Lagrangian (ALM) \citep{rockafellar1974augmented}, used to solve \eqref{eq:primal-oblique} is
\begin{equation}
\mathcal{L}_{\eta}(W,\lambda)
=
\sum_{i<j}(w_i^\top w_j)^2
+
\frac{1}{2\eta}
\sum_{j=1}^p
\left[
\max\{0,\lambda_j+\eta g_j(W)\}^2-\lambda_j^2
\right],
\label{eq:alm}
\end{equation}
with multiplier update
$
\lambda_j
\leftarrow
\max\{0,\lambda_j+\eta g_j(W^\star)\}.
$
The ALM penalty parameter $\eta$ is increased across
outer iterations until the maximum fidelity violation falls below the
declared numerical tolerance. When
$\lambda_j+\eta g_j(W)>0$, the corresponding augmented term reduces to
$\lambda_j g_j(W)+(\eta/2)g_j(W)^2$; the projection through
$\max\{0,\cdot\}$ prevents satisfied inequality constraints from being
penalized unnecessarily. Because the problem is non-convex, it is solved
from multiple initializations and the best feasible solution is retained.

\subsection{Dual decorrelation-budget formulation}
\label{sec:dual}
The dual formulation fixes an acceptable residual-correlation budget and maximizes the common source fidelity supported by that budget. It solves
\begin{equation}
\begin{aligned}
\max_{T,\gamma}\quad & \gamma\\
\text{subject to}\quad
& t_j^\top Rt_j=1,\qquad j=1,\ldots,p,\\
& e_j^\top Rt_j\geq \gamma,\qquad j=1,\ldots,p,\\
& \sum_{i<j}(t_i^\top Rt_j)^2\leq\epsilon.
\end{aligned}
\label{eq:dual}
\end{equation}
Here, $\gamma=\min_j\operatorname{Cor}(\widetilde X_j,X_j)$ is the common source-fidelity floor, and $\epsilon$ is the declared residual-correlation budget. The fidelity and decorrelation inequalities are enforced by ALM on the product of the oblique manifold and $\mathbb{R}$. Varying $\epsilon$ traces the fidelity--decorrelation frontier from the budget-constrained direction.

\section{Exact-decorrelation theory}\label{sec:theory}

\subsection{The family of exact decorrelators}\label{sec:exact-family}

The rotational non-uniqueness of whitening is well established \citep{kessy2018optimal}. Under the present right-multiplication convention, the family of square exact decorrelators can be written as follows.

\begin{proposition}[Characterization of square exact decorrelators]
\label{prop:exact-decorrelators}
Let $R\in\mathbb{R}^{p\times p}$ be symmetric positive definite, define
$\mathcal{D}(R)=\{T\in\mathbb{R}^{p\times p}:T^\top RT=I\}$,
and let $\mathcal{O}(p)$ denote the group of $p\times p$ orthogonal matrices.
Then
\begin{equation}
\mathcal{D}(R)=\{R^{-1/2}Q:Q\in O(p)\},
\label{eq:exact-family}
\end{equation}
\end{proposition}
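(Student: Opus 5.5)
The plan is to prove the set equality in \eqref{eq:exact-family} by establishing both inclusions. Throughout, $R^{1/2}$ denotes the symmetric positive-definite square root of $R$, as in \eqref{eq:reparameterization}. It exists because $R=U\Lambda U^\top$ with $U$ orthogonal and $\Lambda$ diagonal with strictly positive entries, so we may set $R^{1/2}=U\Lambda^{1/2}U^\top$ and $R^{-1/2}=U\Lambda^{-1/2}U^\top$. These matrices are symmetric, invertible, mutually inverse, and satisfy $R^{-1/2}RR^{-1/2}=I$.

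For the inclusion $\supseteq$, take any $Q\in\mathcal{O}(p)$ and put $T=R^{-1/2}Q$. Symmetry of $R^{-1/2}$ gives
\[
T^\top RT=Q^\top R^{-1/2}RR^{-1/2}Q=Q^\top Q=I,
\]
so $T\in\mathcal{D}(R)$.

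For the inclusion $\subseteq$, take $T$ with $T^\top RT=I$ and define $Q=R^{1/2}T$; this is the same change of variables $W=R^{1/2}T$ used in Section~\ref{sec:oblique}. Then
\[
Q^\top Q=T^\top R^{1/2}R^{1/2}T=T^\top RT=I .
\]
Because $Q$ is square, a left inverse is automatically a two-sided inverse, so $QQ^\top=I$ as well and $Q\in\mathcal{O}(p)$. Multiplying $Q=R^{1/2}T$ on the left by $R^{-1/2}$ gives $T=R^{-1/2}Q$, which completes the inclusion.

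No step here is a real obstacle; the argument is essentially a computation. The only points needing care are that the square root can be chosen symmetric, which is what allows $(R^{-1/2})^\top=R^{-1/2}$ in the first computation, and that squareness of $Q$ upgrades $Q^\top Q=I$ to full orthogonality. One could also remark that the correspondence $Q\leftrightarrow T$ is a bijection, since $R^{1/2}$ is invertible, but the set equality does not need this.
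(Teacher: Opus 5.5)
Your proof is correct and follows the paper's argument: set $Q=R^{1/2}T$, use $Q^\top Q=T^\top RT=I$ to get $T=R^{-1/2}Q$ with $Q$ orthogonal, and check the converse by direct computation. Your remarks on choosing the symmetric square root and on squareness giving $QQ^\top=I$ make explicit details that the paper leaves implicit.
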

\begin{proof}
If $T\in\mathcal{D}(R)$, define $Q=R^{1/2}T$. Then
$Q^\top Q=T^\top RT=I$, so $Q\in O(p)$ and
$T=R^{-1/2}Q$. Conversely, if $T=R^{-1/2}Q$ for
$Q\in O(p)$, then
$T^\top RT
=
Q^\top R^{-1/2}RR^{-1/2}Q
=
Q^\top Q
=
I.
$
\end{proof}

Proposition~\ref{prop:exact-decorrelators} makes the non-uniqueness of exact decorrelation explicit. If $R=V\Lambda V^\top$, the $R$-normalized PCA transform $T_{\mathrm{PCA}}=V\Lambda^{-1/2}$ belongs to \eqref{eq:exact-family} and corresponds to the orthogonal choice $Q=V$. ZCA corresponds to another choice. CORAL's dense exact-decorrelation question is therefore not whether exact decorrelation exists, but which member of $R^{-1/2}O(p)$ best preserves source-variable identity.

\subsection{Maximum fidelity compatible with exact decorrelation}\label{sec:rho-star}

\begin{definition}[Exact-decorrelation fidelity threshold]
\label{def:rho-star}
The maximum common source fidelity compatible with exact decorrelation is
\begin{equation}
\rho_\star(R)
=
\max_{Q\in O(p)}
\min_{1\leq j\leq p}
e_j^\top R^{1/2}q_j,
\label{eq:rho-star}
\end{equation}
where $q_j$ denotes column $j$ of $Q$.
\end{definition}

\begin{proposition}[Feasibility threshold]
\label{prop:rho-star-threshold}
An exactly decorrelating transformation satisfying
$e_j^\top Rt_j\geq\rho_{\min}$ for all $j$ exists if and only if
$\rho_{\min}\leq\rho_\star(R)$. The maximum in
\eqref{eq:rho-star} is attained.
\end{proposition}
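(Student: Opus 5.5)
The argument is a compactness step followed by a reduction to Proposition~\ref{prop:exact-decorrelators}. I would first show that the maximum in \eqref{eq:rho-star} is attained, since the ``if'' direction of the equivalence relies on it.

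\textbf{Attainment.} Define $f(Q)=\min_{1\le j\le p} e_j^\top R^{1/2}q_j$ on $O(p)$.
\begin{itemize}
\item Each map $Q\mapsto e_j^\top R^{1/2}q_j$ is linear in $Q$, so it is continuous.
\item A minimum of finitely many continuous functions is continuous, so $f$ is continuous.
\item $O(p)$ is closed, as the preimage of $I$ under the continuous map $Q\mapsto Q^\top Q$.
\item $O(p)$ is bounded, since every $Q\in O(p)$ has $\|Q\|_F^2=p$. Hence $O(p)$ is compact.
\end{itemize}
By the extreme value theorem, $f$ attains its supremum at some $Q^\star\in O(p)$. So $\rho_\star(R)=f(Q^\star)$ is a genuine maximum.

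\textbf{Translating the fidelity constraint.} By Proposition~\ref{prop:exact-decorrelators}, $T$ is an exact decorrelator, i.e.\ $T^\top RT=I$, if and only if $T=R^{-1/2}Q$ for some $Q\in O(p)$. In that case the $j$th column is $t_j=R^{-1/2}q_j$, so
\[
e_j^\top Rt_j=e_j^\top R\,R^{-1/2}q_j=e_j^\top R^{1/2}q_j .
\]
The condition ``$T\in\mathcal D(R)$ and $e_j^\top Rt_j\ge\rho_{\min}$ for all $j$'' is therefore equivalent to ``$T=R^{-1/2}Q$ with $Q\in O(p)$ and $f(Q)\ge\rho_{\min}$.''

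\textbf{Necessity.} Suppose such a $T$ exists, and let $Q$ be its orthogonal factor. Then $\rho_{\min}\le f(Q)\le\max_{O(p)}f=\rho_\star(R)$.

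\textbf{Sufficiency.} Suppose $\rho_{\min}\le\rho_\star(R)$. Take $T=R^{-1/2}Q^\star$, which is an exact decorrelator by Proposition~\ref{prop:exact-decorrelators}. It satisfies $e_j^\top Rt_j\ge f(Q^\star)=\rho_\star(R)\ge\rho_{\min}$ for every $j$.

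\textbf{Main obstacle.} The only substantive step is attainment. Without it, the equivalence would fail at the boundary value $\rho_{\min}=\rho_\star(R)$, because a supremum that is not attained would give a strict rather than a non-strict threshold. Compactness of $O(p)$ together with continuity of $f$ settles this point directly.

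A remark on the constraint set: exact decorrelation forces $\operatorname{diag}(T^\top RT)=\mathbf 1$, so the unit-variance normalization \eqref{eq:unitvariance} holds automatically. This is why $e_j^\top Rt_j$ equals the correlation \eqref{eq:fidelity}, and no extra constraint needs to be checked.
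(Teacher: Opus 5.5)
Your proposal is correct and follows the same route as the paper: reduce to $T=R^{-1/2}Q$ via Proposition~\ref{prop:exact-decorrelators}, identify $e_j^\top Rt_j=e_j^\top R^{1/2}q_j$, and obtain attainment from compactness of $O(p)$ and continuity of $Q\mapsto\min_j e_j^\top R^{1/2}q_j$. The only difference is ordering: you prove attainment first and use the maximizer $Q^\star$ in the sufficiency step, which makes explicit that the boundary case $\rho_{\min}=\rho_\star(R)$ depends on attainment; the paper states the equivalence first and proves attainment only in its final sentence.
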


\begin{proof}
By Proposition~\ref{prop:exact-decorrelators}, every exact decorrelator
can be written as $T=R^{-1/2}Q$ for some $Q\in O(p)$. Because
$\widetilde X_j=Xt_j$ and both $X_j$ and $\widetilde X_j$ have unit
variance,
$
\operatorname{Cor}(\widetilde X_j,X_j)=e_j^\top Rt_j.
$
Since $t_j=R^{-1/2}q_j$,
$
e_j^\top Rt_j
=
e_j^\top R^{1/2}q_j.
$
Thus, for a fixed $Q$, the largest common fidelity floor it supports is
$
\min_j e_j^\top R^{1/2}q_j.
$
By definition in \eqref{eq:rho-star}, the maximum of this common fidelity over all $Q\in O(p)$ is $\rho_\star(R)$. Therefore an exact decorrelator satisfying the declared
fidelity floor exists if and only if $\rho_{\min}\leq\rho_\star(R)$. Finally, $O(p)$ is compact and $\min_j e_j^\top R^{1/2}q_j$ is continuous in $Q$, so the maximum is
attained.
\end{proof}

The threshold in Proposition~\ref{prop:rho-star-threshold} separates two regimes. Below $\rho_\star(R)$, zero residual correlation and the declared fidelity floor are jointly feasible. Above $\rho_\star(R)$, any feasible CORAL solution must retain nonzero residual correlation because the fidelity-feasible set no longer intersects the exact-decorrelator family.

\subsection{Rigorous bounds for $\rho_\star(R)$}\label{sec:rho-bounds}

\begin{proposition}[Bounds on the exact-decorrelation fidelity threshold]
\label{prop:rho-bounds}
The exact-decorrelation fidelity threshold satisfies
\begin{equation}
\min_j (R^{1/2})_{jj}
\leq
\rho_\star(R)
\leq
\frac{\operatorname{tr}(R^{1/2})}{p}
\leq 1.
\label{eq:rho-basic-bounds}
\end{equation}
A sharper upper bound is available for any nonempty subset
$S\subseteq\{1,\ldots,p\}$:
\begin{equation}
\rho_\star(R)
\leq
\frac{
\left\lVert \bigl(R^{1/2}\bigr)_{[:,S]}
\right\rVert_\ast
}{|S|},
\label{eq:rho-subset-bound}
\end{equation}
where $\bigl(R^{1/2}\bigr)_{[:,S]}$ contains the columns of $R^{1/2}$
indexed by $S$, and $\lVert\cdot\rVert_\ast$ denotes the nuclear norm,
the sum of the singular values. Therefore,
\begin{equation}
\rho_\star(R)
\leq
\min_{\varnothing\neq S\subseteq\{1,\ldots,p\}}
\frac{
\left\lVert \bigl(R^{1/2}\bigr)_{[:,S]}
\right\rVert_\ast
}{|S|}.
\label{eq:rho-best-subset-bound}
\end{equation}
\end{proposition}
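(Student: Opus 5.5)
The lower bound follows by exhibiting a feasible orthogonal matrix; both upper bounds follow by bounding the average of the designated fidelities with a trace/nuclear-norm duality argument. Write $A=R^{1/2}$, which is symmetric positive definite. For the \emph{lower bound} we take $Q=I$ in Definition~\ref{def:rho-star}. This is the ZCA choice, and it gives $e_j^\top A e_j=A_{jj}$. Hence $\rho_\star(R)\geq\min_j A_{jj}$.

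For the \emph{upper bounds} we fix an optimal $Q$, which exists by Proposition~\ref{prop:rho-star-threshold}. For any nonempty $S$ we have
\[
|S|\,\rho_\star(R)\;\leq\;\sum_{j\in S} e_j^\top A q_j \;=\;\sum_{j\in S} (Q^\top A)_{jj}\;=\;\operatorname{tr}\bigl(Q_{[:,S]}^\top A_{[:,S]}\bigr).
\]
The last identity holds because $e_j^\top A q_j=q_j^\top A e_j$ by symmetry of $A$, and $Ae_j$ is column $j$ of $A$. The matrix $Q_{[:,S]}$ has orthonormal columns, so its spectral norm is at most one. By von Neumann's trace inequality, equivalently the duality between the spectral and nuclear norms, $\operatorname{tr}(U^\top M)\leq\lVert U\rVert_2\lVert M\rVert_\ast$. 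This gives \eqref{eq:rho-subset-bound}. Since $S$ was arbitrary, taking the minimum over $S$ gives \eqref{eq:rho-best-subset-bound}.

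Taking $S=\{1,\ldots,p\}$ gives the trace bound. Because $A$ is positive definite, its singular values equal its eigenvalues, so $\lVert A\rVert_\ast=\operatorname{tr}(A)$. Finally, $\operatorname{tr}(A)/p\leq 1$ follows from concavity of the square root via Jensen's inequality:
\[
\frac1p\sum_i\sqrt{\lambda_i}\;\leq\;\sqrt{\frac1p\sum_i\lambda_i}\;=\;\sqrt{\operatorname{tr}(R)/p}\;=\;1,
\]
where $\operatorname{tr}(R)=p$ because $R$ is a correlation matrix with unit diagonal.

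The one step needing care is the duality inequality for a rectangular matrix with orthonormal columns. To verify it, write the SVD $A_{[:,S]}=U\Sigma V^\top$. Then
\[
\operatorname{tr}\bigl(Q_{[:,S]}^\top U\Sigma V^\top\bigr)=\sum_k \sigma_k\, v_k^\top Q_{[:,S]}^\top u_k,
\]
and each factor satisfies $|v_k^\top Q_{[:,S]}^\top u_k|\leq 1$ because $\lVert Q_{[:,S]}\rVert_2\leq 1$. This completes the bound, and everything else is routine.
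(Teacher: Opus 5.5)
Your proof is correct and takes essentially the same route as the paper: $Q=I$ for the lower bound, ``minimum at most the average over $S$'' combined with $\operatorname{tr}\bigl(Q_{[:,S]}^\top (R^{1/2})_{[:,S]}\bigr)\leq\bigl\lVert (R^{1/2})_{[:,S]}\bigr\rVert_\ast$ for the upper bounds, and concavity of the square root with $\operatorname{tr}(R)=p$ for the final inequality. The differences are cosmetic: you obtain the trace bound as the case $S=\{1,\ldots,p\}$ of the subset bound rather than proving it first, and you give the SVD verification of the trace--nuclear-norm inequality, which the paper only asserts.
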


\begin{proof}
For the lower bound, choose $Q=I$ in \eqref{eq:rho-star}. The resulting
exact decorrelator has common source fidelity
$
\min_j e_j^\top R^{1/2}e_j
=
\min_j (R^{1/2})_{jj}.
$
Because $\rho_\star(R)$ is the maximum common fidelity over all
$Q\in O(p)$, it cannot be smaller than this attainable value.

For an upper bound, the minimum fidelity cannot exceed the average
fidelity. Thus, for any $Q\in O(p)$,
$
\min_j e_j^\top R^{1/2}q_j
\leq
\frac{1}{p}
\sum_{j=1}^p e_j^\top R^{1/2}q_j
=
\frac{1}{p}\operatorname{tr}(R^{1/2}Q).
$
The maximum of $\operatorname{tr}(R^{1/2}Q)$ over orthogonal $Q$ is
$\lVert R^{1/2}\rVert_\ast$. Since $R^{1/2}$ is positive definite,
its nuclear norm equals its trace. Hence
$
\rho_\star(R)
\leq
\frac{\operatorname{tr}(R^{1/2})}{p}.
$
Because $R$ is a correlation matrix, $\operatorname{tr}(R)=p$.
Concavity of the square root then gives
$
\frac{\operatorname{tr}(R^{1/2})}{p}\leq1.
$

The same argument can be applied to any nonempty subset $S$. For any
$Q\in O(p)$,
$
\min_j e_j^\top R^{1/2}q_j
\leq
\frac{1}{|S|}
\sum_{j\in S}e_j^\top R^{1/2}q_j.
$
Because $R^{1/2}$ is symmetric, the sum on the right is
$
\operatorname{tr}\left[
\bigl(R^{1/2}\bigr)_{[:,S]}^\top Q_{[:,S]}
\right].
$
The columns of $Q_{[:,S]}$ are orthonormal, so this trace cannot exceed
the nuclear norm of $\bigl(R^{1/2}\bigr)_{[:,S]}$. Therefore,
$
\rho_\star(R)
\leq
\frac{
\left\lVert \bigl(R^{1/2}\bigr)_{[:,S]}
\right\rVert_\ast
}{|S|}.
$
Since the bound holds for every nonempty $S$, taking the minimum over
all such subsets gives \eqref{eq:rho-best-subset-bound}.
\end{proof}

Numerically optimizing \eqref{eq:rho-star} produces a feasible orthogonal matrix and therefore a constructive lower bound on the true $\rho_\star(R)$ even when global optimality is not certified. Combining that achieved fidelity with Proposition~\ref{prop:rho-bounds} yields a certified interval. When the interval lies entirely between two adjacent values of the fidelity grid, the observed transition from exact to inexact decorrelation has a theoretical explanation independent of the primal solver.

\subsection{A sign-invariant PCA benchmark}\label{sec:pca-benchmark}

The sign of a PCA eigenvector is arbitrary, so signed component-to-source correlations cannot be used directly as an interpretability benchmark. For $R=V\Lambda V^\top$ and $T_{\mathrm{PCA}}=V\Lambda^{-1/2}$, the matrix of correlations between original standardized variables and unit-variance PCA coordinates is
$
C_{\mathrm{PCA}}=RT_{\mathrm{PCA}}=V\Lambda^{1/2}.
\label{eq:pca-correlation}
$
A fair one-to-one source assignment must therefore depend on
$|C_{\mathrm{PCA}}|$. Let $\Pi_p$ denote the set of all permutations of
$\{1,\ldots,p\}$. Because CORAL's primary guarantee concerns minimum
fidelity, we report the bottleneck assignment
$
\rho_{\mathrm{PCA,min}}
=
\max_{\sigma\in\Pi_p}
\min_j |C_{j,\sigma(j)}|,
\label{eq:pca-bottleneck}
$
which gives PCA the most favorable assignment for its worst-matched
source variable. We separately report the sum-optimal assignment
$
\bar\rho_{\mathrm{PCA}}
=
\frac{1}{p}
\max_{\sigma\in\Pi_p}
\sum_j |C_{j,\sigma(j)}|.
\label{eq:pca-sum}
$

\subsection{Support-restricted exact decorrelation}
\label{sec:support-theory}

Let $\Omega\subseteq\{1,\ldots,p\}^2$ denote the allowed entries of the
transformation matrix $T$, with $(j,j)\in\Omega$ for every $j$. Define
\begin{equation}
\mathcal{C}_{\Omega}(R)
=
\left\{
T:
t_j^\top Rt_j=1\ \forall j,\;
T_{ij}=0\ \text{for }(i,j)\notin\Omega
\right\}.
\label{eq:support-set}
\end{equation}
Two natural support-constrained residual criteria are
\begin{equation}
\delta_{2,\Omega}(R)
=
\min_{T\in\mathcal{C}_{\Omega}(R)}
\frac{1}{2}
\left(
\lVert T^\top RT\rVert_F^2-p
\right),
\label{eq:delta2}
\end{equation}
and
\begin{equation}
\delta_{\infty,\Omega}(R)
=
\min_{T\in\mathcal{C}_{\Omega}(R)}
\max_{i<j}|t_i^\top Rt_j|.
\label{eq:deltainf}
\end{equation}
The first matches CORAL's squared-correlation objective. The second
minimizes the largest residual correlation.

\begin{proposition}[Existence of support-constrained minima]
\label{prop:support-floor}
Assume $R$ is positive definite and every transformation column has at
least one allowed entry. Then $\mathcal{C}_{\Omega}(R)$ is compact and
the minima in \eqref{eq:delta2} and \eqref{eq:deltainf} are attained.
Moreover,
$\delta_{2,\Omega}(R)=0$ if and only if
$\delta_{\infty,\Omega}(R)=0$, which occurs if and only if there exists
$Q\in O(p)$ such that $R^{-1/2}Q$ respects $\Omega$. If no such $Q$
exists, both minima are strictly positive.
\end{proposition}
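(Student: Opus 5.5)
The proof proceeds in three steps: compactness, attainment of both minima, and the characterization of when the minima vanish.

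\emph{Compactness.} The set $\mathcal{C}_{\Omega}(R)$ is closed, because it is cut out by the continuous equalities $t_j^\top Rt_j=1$ and the linear equalities $T_{ij}=0$ for $(i,j)\notin\Omega$. For boundedness we use positive definiteness. If $\lambda_{\min}>0$ is the smallest eigenvalue of $R$, then $1=t_j^\top Rt_j\ge\lambda_{\min}\lVert t_j\rVert_2^2$, so $\lVert t_j\rVert_2\le\lambda_{\min}^{-1/2}$ for every column. By Heine--Borel, $\mathcal{C}_{\Omega}(R)$ is compact.

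The set is also nonempty. Since each column has an allowed entry $(i_j,j)$, we can take $t_j=e_{i_j}/\sqrt{R_{i_ji_j}}$; this is well defined because $R_{i_ji_j}>0$. (The standing assumption $(j,j)\in\Omega$ gives this directly.)

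\emph{Attainment.} Both objectives are continuous in $T$. The function $\tfrac12(\lVert T^\top RT\rVert_F^2-p)$ is a polynomial in the entries of $T$. The function $\max_{i<j}|t_i^\top Rt_j|$ is a finite maximum of continuous functions. By Weierstrass, both minima over $\mathcal{C}_{\Omega}(R)$ are attained.

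\emph{Characterization of zero.} On $\mathcal{C}_{\Omega}(R)$ the diagonal of $T^\top RT$ is identically $1$. Hence $\tfrac12(\lVert T^\top RT\rVert_F^2-p)=\sum_{i<j}(t_i^\top Rt_j)^2$, and both objectives are nonnegative. For a fixed $T$, each objective vanishes exactly when every off-diagonal entry $t_i^\top Rt_j$ is zero, that is, exactly when $T^\top RT=I$.

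Because the minima are attained, $\delta_{2,\Omega}(R)=0$ if and only if some $T\in\mathcal{C}_{\Omega}(R)$ satisfies $T^\top RT=I$. The same statement holds for $\delta_{\infty,\Omega}(R)$, so the two conditions are equivalent. Such a $T$ lies in $\mathcal{D}(R)$. By Proposition~\ref{prop:exact-decorrelators}, this happens if and only if $T=R^{-1/2}Q$ for some $Q\in O(p)$ that respects $\Omega$.

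For the converse direction, any $R^{-1/2}Q$ that respects $\Omega$ satisfies $T^\top RT=I$. Its diagonal is therefore $1$, so it lies in $\mathcal{C}_{\Omega}(R)$ and achieves value $0$ for both objectives.

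\emph{Strict positivity.} If no such $Q$ exists, then no minimizer has value $0$. Since both objectives are nonnegative and their minima are attained, both minima are strictly positive.

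The only step needing care is attainment. Without it, the infimum could be $0$ while no exact decorrelator respects $\Omega$. Compactness, obtained from the bound via $\lambda_{\min}>0$, is what rules this out, so that step is the crux.
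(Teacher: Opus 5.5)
Your proof is correct and follows the paper's argument. Each column's constraint set (the unit-$R$ ellipsoid intersected with the allowed coordinate subspace) is compact, both minima are attained by continuity, and a zero minimum forces $T^\top RT=I$, whence $T=R^{-1/2}Q$ by Proposition~\ref{prop:exact-decorrelators}. You also make explicit two points the paper leaves implicit: the boundedness estimate $\lVert t_j\rVert_2\le\lambda_{\min}^{-1/2}$, and the nonemptiness of $\mathcal{C}_{\Omega}(R)$, which is where the allowed-entry hypothesis is actually used and which Weierstrass requires.
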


\begin{proof}
For each column, the unit-$R$ ellipsoid intersected with its allowed
coordinate subspace is closed and bounded, hence compact. Their finite
Cartesian product is therefore compact. Because both residual criteria
are continuous, their minima are attained.

Either minimum is zero exactly when all off-diagonal elements of
$T^\top RT$ are zero. The unit-variance constraints then imply
$T^\top RT=I$. By Proposition~\ref{prop:exact-decorrelators},
$T=R^{-1/2}Q$ for some $Q\in O(p)$. Thus, zero residual is possible if
and only if an exact decorrelator respects $\Omega$. Otherwise, the
attained nonnegative minima must be strictly positive.
\end{proof}

Proposition~\ref{prop:support-floor} establishes when a positive
support-induced floor must exist, but it does not certify its numerical
value for a particular non-convex instance. The support computations
below are therefore reported as multistart best-found values rather
than global minima.

\section{Study Design}\label{sec:methods}

\subsection{Simulation correlation structures}
\label{sec:simulation}

For each simulated dimension, a matrix
$G\in\mathbb{R}^{p\times p}$ was generated with independent standard
normal entries, and the correlation matrix was constructed as
\[
M=\frac{GG^\top}{p}+0.5I,
\qquad
\Delta=\operatorname{diag}(M),
\qquad
R=\Delta^{-1/2}M\Delta^{-1/2}.
\]
This construction produced positive-definite correlation matrices with unit diagonal and moderate off-diagonal correlation. Dense simulations used $p=6$, $p=18$, and $p=50$. For $p=50$, the maximum and mean absolute off-diagonal correlations were $0.2909$ and $0.0748$, respectively. The dimensions $p=6$, $18$, and $50$ were chosen to span small, moderate, and larger systems; $p=18$ also serves as the basis for the support-restriction ablation, while $p=50$ provides a scalability check rather than a controlled dimension-effect experiment.

For the $p=18$ support-restriction ablation, an off-diagonal transformation coefficient was allowed when $|R_{ij}|>0.15$, with all diagonal coefficients retained. This yielded 92 allowed off-diagonal coefficients among 306 possible directed off-diagonal entries, or 30.1\% density. Including the 18 diagonal coefficients gave 110 free transformation coefficients and 214 structural zeros.

The exact-decorrelator family $R^{-1/2}O(18)$ has $\dim O(18)=18(17)/2=153$ rotational degrees of freedom. In the $p=18$ support-restriction ablation, the 214 declared zeros therefore exceeded the rotational degrees of freedom, while the 110 allowed coefficients were fewer than the 171 scalar equations in $T^\top RT=I$. These counts provided a generic dimension diagnostic, not an infeasibility proof for the realized $R$, because special alignment between $R$ and the support can permit sparse exact decorrelators.

\subsection{Numerical optimization}\label{sec:numerical}

Dense CORAL was solved on the oblique manifold using Riemannian trust-region steps \citep{absil2007trust} with ALM fidelity constraints. Every non-convex optimization reported in the study used 100 independent starts, retaining the best feasible solution. The exact-decorrelation threshold \eqref{eq:rho-star} was solved as a max--min problem on the square Stiefel manifold $\mathrm{St}(p,p)=O(p)$ \citep{edelman1998geometry,song2024linear}, with common fidelity as a scalar decision variable and the fidelity inequalities enforced by ALM \citep{rockafellar1974augmented}. Solver tolerances are verified in Appendix~A. Each achieved $\rho_\star$ value provides a constructive lower bound because its corresponding $Q$ is orthogonal to numerical precision; global optimality is not claimed. Upper bounds follow Proposition~\ref{prop:rho-bounds}. The empirical data set analyses enumerated all nonempty subsets to obtain the strongest subset nuclear-norm bound.

The support-restricted solver removes forbidden $T_{ij}$ coefficients from the decision vector, so support is exact by construction rather than approximately enforced by a penalty. Unit-variance equalities and fidelity inequalities are solved by sequential quadratic programming from multiple starts. The same reduced parameterization is used to estimate \eqref{eq:delta2} and \eqref{eq:deltainf}. Because these problems remain non-convex, the resulting values are explicitly labeled best-found.

\subsection{PCA assignment and empirical data}
\label{sec:data}

PCA components were rescaled to unit variance under the same $R$ metric as CORAL. Because eigenvector signs are arbitrary and principal components have no intrinsic one-to-one correspondence with the original variables, assignments were based on absolute source-component correlations. The bottleneck assignment in \eqref{eq:pca-bottleneck} maximized minimum source fidelity, while the sum-optimal assignment in \eqref{eq:pca-sum} maximized mean source fidelity. The latter was solved using the Hungarian algorithm \citep{kuhn1955hungarian}. These assignments provide sign-invariant PCA benchmarks under the fidelity criterion used for CORAL.

The first empirical application used the 2019 World Development Indicators (WDI) \citep{worldbank2026wdi}. After excluding regional and income-group aggregates, the complete-case sample contained 137 countries observed on eight indicators: fertility rate, GDP per capita, government expenditure on education, gross capital formation, internet use, life expectancy at birth, under-5 mortality, and urban population. Adult literacy and energy use per capita were excluded because retaining either would have reduced the complete-case sample to 65 countries. The second application used the UCI Wine dataset \citep{aeberhard1992wine}, containing 178 observations on 13 chemical and visual measurements. Each empirical dataset was standardized before constructing its sample correlation matrix $R$, so the same CORAL, ZCA, and PCA fidelity definitions applied across the simulation and empirical analyses.

\section{Results}
\label{sec:results}

Table~\ref{tab:rho-star} summarizes the exact-decorrelation thresholds, the ZCA baseline, and the sign-invariant PCA benchmarks across all five systems. ZCA corresponds to the feasible choice $Q=I$ and therefore provides a constructive lower bound on $\rho_\star(R)$. The CORAL max--min search improves this lower bound in every system, while the rigorous upper bounds tightly constrain the remaining gap. For the $p=6$ and $p=18$ simulations, the resulting intervals place $\rho_\star(R)$ between the tested fidelity levels $0.95$ and $0.99$. At $p=50$, the interval straddles $0.95$. The empirical thresholds are lower, with $\rho_\star(R)$ near $0.75$ for WDI and $0.83$ for Wine.

\begin{table*}[h]
\caption{Fidelity benchmarks and bounds under exact decorrelation. ZCA reports the minimum source fidelity for the $Q=I$ decorrelator. CORAL reports the best achieved minimum fidelity from the max--min search over $Q\in O(p)$, while CORAL upper gives a rigorous upper bound on $\rho_\star(R)$. The simulations use the trace bound; WDI and Wine use the strongest enumerated subset bound. PCA reports sign-invariant minimum and mean fidelities under bottleneck and sum-optimal assignments, respectively.}
\label{tab:rho-star}
\centering
\footnotesize
\begin{tabular}{@{}lrrrrr@{}}
\toprule
System & $p$ & ZCA & CORAL & CORAL upper & PCA  min / mean\\
\midrule
Synthetic & 6  & 0.9565 & 0.9718 & 0.9757 & 0.3579 / 0.6665\\
Synthetic & 18 & 0.9365 & 0.9585 & 0.9616 & 0.3291 / 0.4838\\
Synthetic & 50 & 0.9215 & 0.9491 & 0.9504 & 0.1716 / 0.3206\\
WDI 2019  & 8  & 0.6706 & 0.7509 & 0.7579 & 0.1797 / 0.5399\\
Wine      & 13 & 0.7107 & 0.8262 & 0.8355 & 0.2482 / 0.4662\\
\botrule
\end{tabular}
\end{table*}

The ZCA results clarify what CORAL gains from optimizing the orthogonal freedom within the exact-decorrelator family. Because ZCA sets $Q=I$, its minimum source fidelity is $\min_j(R^{1/2})_{jj}$ and must satisfy $\rho_{\mathrm{ZCA}}\leq\rho_\star(R).$

The optimized exact decorrelators improve on this baseline in all five systems. The increase is systematic in the simulated systems, from $0.9565$ to $0.9718$ at $p=6$, from $0.9365$ to $0.9585$ at $p=18$, and from $0.9215$ to $0.9491$ at $p=50$. The gain is larger in the empirical applications, from $0.6706$ to $0.7509$ for WDI and from $0.7107$ to $0.8262$ for Wine. Thus, ZCA already preserves substantial source identity while exactly decorrelating, but the choice $Q=I$ need not maximize the weakest source match.

PCA exhibits a different pattern. In the three simulated systems, its best bottleneck fidelities are $0.3579$, $0.3291$, and $0.1716$, compared with CORAL exact-decorrelator fidelities of at least $0.9718$, $0.9585$, and $0.9491$. For WDI and Wine, the corresponding PCA minima are $0.1797$ and $0.2482$, compared with $0.7509$ and $0.8262$. These comparisons are invariant to PCA component signs and ordering. Unlike ZCA, however, PCA is not a lower bound on $\rho_\star(R)$; its assignment criterion is unrelated to the max--min fidelity optimization defining the threshold.

\subsection{Dense simulation results}
\label{sec:dense-results}

Table~\ref{tab:dense-frontiers} combines the hard-constrained primal results for all three simulated dimensions. For $p=6$ and $p=18$, exact decorrelation is retained through $\rho_{\min}=0.95$. At $p=50$, exact decorrelation is retained through $0.85$, while the maximum residual correlation at $0.95$ is only $0.0035$. At $\rho_{\min}=0.99$, exact decorrelation is impossible in all three systems.

\begin{table*}[h]
\caption{Hard-constrained CORAL frontiers for the three dense simulated systems. All reported numerical solutions satisfy the declared fidelity constraint.}
\label{tab:dense-frontiers}
\centering
\scriptsize
\setlength{\tabcolsep}{4pt}
\begin{tabular}{@{}r|rr|rr|rr@{}}
\toprule
& \multicolumn{2}{c}{$p=6$} & \multicolumn{2}{c}{$p=18$} & \multicolumn{2}{c}{$p=50$}\\
\cmidrule(lr){2-3}\cmidrule(lr){4-5}\cmidrule(lr){6-7}
$\rho_{\min}$
& Max $|\mathrm{offdiag}|$ & Min fidelity
& Max $|\mathrm{offdiag}|$ & Min fidelity
& Max $|\mathrm{offdiag}|$ & Min fidelity\\
\midrule
0.30 & 0.0000 & 0.3000 & 0.0000 & 0.3000 & --     & --\\
0.50 & 0.0000 & 0.5000 & 0.0000 & 0.5000 & 0.0000 & 0.5000\\
0.70 & 0.0000 & 0.7000 & 0.0000 & 0.7000 & 0.0000 & 0.7000\\
0.85 & 0.0000 & 0.8500 & 0.0000 & 0.8500 & 0.0000 & 0.8500\\
0.95 & 0.0000 & 0.9500 & 0.0000 & 0.9500 & 0.0035 & 0.9500\\
0.99 & 0.1514 & 0.9899 & 0.2045 & 0.9900 & 0.1700 & 0.9899\\
\bottomrule
\end{tabular}
\end{table*}
The threshold bounds explain these transitions. For $p=6$, $0.9718\leq\rho_\star(R)\leq0.9757$, while for $p=18$, $0.9585\leq\rho_\star(R)\leq0.9616$. Exact decorrelation at $\rho_{\min}=0.95$ is therefore feasible in both systems and impossible at $0.99$. For $p=50$, $0.9491\leq\rho_\star(R)\leq0.9504$, so the bounds do not resolve exact feasibility at precisely $0.95$. The nearly zero primal residual at that target is consistent with its location at the threshold.

The $p=50$ experiment also provides a scalability check. The five-point primal frontier required $5.42$ seconds, while the 100-start $\rho_\star$ search required $130.55$ seconds. The primal operating points therefore remained inexpensive relative to direct max--min estimation of the exact-decorrelation threshold (see Appendix~\ref{app:final-verification} for full numerical diagnostics).

\subsection{Empirical applications}
\label{sec:empirical-results}

The empirical applications exhibit lower exact-decorrelation thresholds than the simulated systems. WDI has maximum absolute pairwise correlation $0.906$ and mean absolute pairwise correlation $0.434$. Wine has maximum absolute pairwise correlation $0.865$ and condition number $45.5$. In both datasets, exact decorrelation is attainable at $\rho_{\min}=0.70$ but not at $0.85$.

For WDI, the strongest subset upper bound is generated by fertility, internet use, life expectancy, and under-5 mortality, giving $0.7509\leq\rho_\star(R_{\mathrm{WDI}})\leq0.7579$. This subset identifies the variables producing the strongest upper bound and should not be interpreted causally. Because $0.85$ exceeds the rigorous upper bound, exact decorrelation at that fidelity level is impossible.

For Wine, the strongest subset consists of total phenols, flavanoids, and OD280/OD315 of diluted wines, giving $0.8262\leq\rho_\star(R_{\mathrm{Wine}})\leq0.8355$. Again, $0.85$ exceeds the rigorous upper bound. Residual correlation at $\rho_{\min}=0.85$ is therefore required by the correlation structure rather than evidence that the optimizer failed to locate an exact decorrelator.

Table~\ref{tab:real-primal} reports the corresponding hard-constrained primal frontiers. WDI incurs the sharper transition: maximum residual correlation rises from zero at $\rho_{\min}=0.70$ to $0.2832$ at $0.85$ and $0.8296$ at $0.99$. Wine rises more gradually at first, reaching $0.0963$ at $0.85$, but reaches $0.7605$ at $0.99$. The declared fidelity floor is met throughout both sweeps.

\begin{table*}[t]
\caption{Hard-constrained primal frontiers for the two empirical datasets.}
\label{tab:real-primal}
\centering
\scriptsize
\begin{tabular}{@{}r|rr|rr@{}}
\toprule
& \multicolumn{2}{c}{WDI 2019 ($p=8$, $n=137$)} & \multicolumn{2}{c}{Wine ($p=13$, $n=178$)}\\
\cmidrule(lr){2-3}\cmidrule(lr){4-5}
$\rho_{\min}$ & Max $|\mathrm{offdiag}|$ & Min fidelity & Max $|\mathrm{offdiag}|$ & Min fidelity\\
\midrule
0.50 & 0.0000 & 0.5000 & 0.0000 & 0.5000\\
0.70 & 0.0000 & 0.7000 & 0.0000 & 0.7000\\
0.85 & 0.2832 & 0.8500 & 0.0963 & 0.8500\\
0.95 & 0.6573 & 0.9500 & 0.5434 & 0.9500\\
0.99 & 0.8296 & 0.9900 & 0.7605 & 0.9900\\
\botrule
\end{tabular}
\end{table*}

The dual formulation traces the same trade-off from the residual-correlation side. Table~\ref{tab:dual-real} reports the maximum common fidelity $\gamma$ under declared squared-correlation budgets $\epsilon$. The budget is active to numerical precision throughout the sweep. For example, $\epsilon=0.10$ supports $\gamma=0.8098$ for WDI and $\gamma=0.8715$ for Wine. Both exceed their exact-decorrelation thresholds because the dual permits nonzero residual correlation.

\begin{table}[t]
\caption{Dual common fidelity under declared squared-correlation budgets.}
\label{tab:dual-real}
\centering
\scriptsize
\begin{tabular}{@{}rrr@{}}
\toprule
$\epsilon$ (squared-correlation budget) & WDI $\gamma$ & Wine $\gamma$\\
\midrule
0.01 & 0.7711 & 0.8428\\
0.05 & 0.7940 & 0.8602\\
0.10 & 0.8098 & 0.8715\\
0.20 & 0.8305 & 0.8858\\
0.40 & 0.8570 & 0.9033\\
\botrule
\end{tabular}
\end{table}

The empirical PCA benchmarks show the same distinction as the simulations. PCA's best minimum source fidelity is $0.1797$ for WDI and $0.2482$ for Wine, compared with exact-decorrelator fidelities of at least $0.7509$ and $0.8262$. The empirical correlation structures therefore reduce the fidelity available under exact decorrelation, but the reduction is not specific to CORAL.

\subsection{Support-restriction ablation}
\label{sec:support-results}

The support-restriction ablation isolates the effect of limiting which source variables may contribute to each transformed coordinate. For the $p=18$ simulation, off-diagonal transformation entries are allowed when $|R_{ij}|>0.15$, with all diagonal entries retained. The resulting support contains 92 of 306 directed off-diagonal entries (30.1\%), giving 110 free transformation coefficients and 214 structural zeros.

The unrestricted exact-decorrelator family $R^{-1/2}O(18)$ has 153 rotational degrees of freedom, while exact decorrelation imposes 171 independent symmetric equations. The 214 structural zeros exceed the rotational degrees of freedom, and the 110 free coefficients are fewer than the number of exact-decorrelation equations. These counts are a generic dimension diagnostic rather than an infeasibility proof because special alignment between $R$ and the support can still permit sparse exact decorrelators.

Table~\ref{tab:support-frontier} reports the exact-support CORAL frontier. Every reported solution satisfies the declared support and unit-variance constraints to numerical tolerance. Unlike the dense case, the best-found solutions retain nonzero residual correlation throughout the fidelity sweep. Maximum residual correlation varies between approximately $0.16$ and $0.20$ through $\rho_{\min}=0.95$ and reaches $0.1856$ at $0.99$.

\begin{table}[h]
\caption{Hard-constrained CORAL frontier under the exact support restriction for the $p=18$ simulated system.}
\label{tab:support-frontier}
\centering
\scriptsize
\begin{tabular}{@{}rrrr@{}}
\toprule
$\rho_{\min}$ & Max $|\mathrm{offdiag}|$ & Mean $|\mathrm{offdiag}|$ & Min fidelity\\
\midrule
0.00 & 0.1622 & 0.0302 & 0.0000\\
0.30 & 0.1962 & 0.0310 & 0.3000\\
0.50 & 0.2001 & 0.0296 & 0.5000\\
0.70 & 0.1832 & 0.0308 & 0.7000\\
0.85 & 0.1624 & 0.0350 & 0.8500\\
0.95 & 0.1702 & 0.0483 & 0.9500\\
0.99 & 0.1856 & 0.0723 & 0.9900\\
\bottomrule
\end{tabular}
\end{table}

The non-monotonic maximum residual in Table~\ref{tab:support-frontier} is not inconsistent with the CORAL objective. CORAL minimizes aggregate squared residual correlation rather than the largest residual pair. A solution can therefore reduce the aggregate objective while allowing a larger individual residual correlation.

This distinction is explicit when the fidelity requirement is removed. Table~\ref{tab:support-floors} compares the best-found solution under CORAL's squared aggregate criterion with a separate min--max optimization. Minimizing $D_2$ gives maximum residual correlation $0.1839$ and mean absolute residual $0.0259$. Minimizing the worst pair reduces the maximum residual to $0.0729$ but increases the mean residual to $0.0605$. The two criteria therefore distribute residual dependence differently.

\begin{table}[h]
\caption{Best-found exact-support residuals with no fidelity requirement. The reported values are multistart numerical optima and are not globally certified minima.}
\label{tab:support-floors}
\centering
\scriptsize
\begin{tabular}{@{}lrrr@{}}
\toprule
Target criterion & Target value & Max $|\mathrm{offdiag}|$ & Mean $|\mathrm{offdiag}|$\\
\midrule
Squared aggregate $D_2$ & 0.2500 & 0.1839 & 0.0259\\
Worst pair $\delta_{\infty,\Omega}$ & 0.0729 & 0.0729 & 0.0605\\
\botrule
\end{tabular}
\end{table}

Proposition~\ref{prop:support-floor} establishes the condition under which a positive support-induced residual floor must exist, but the present non-convex computations do not prove that the realized support excludes every exact decorrelator. They also do not certify the global numerical values of the minima in Table~\ref{tab:support-floors}. The results should therefore be interpreted as best-found evidence that exact support materially changes the attainable decorrelation geometry, not as a globally certified support floor.

Sparse PCA provides a secondary benchmark. At 85 nonzero loadings, close in order of magnitude to the CORAL support, sparse PCA produces maximum and mean residual correlations of $0.5639$ and $0.1226$. (Appendix~\ref{app:sparse-pca} reports the corresponding source-fidelity values). Increasing sparsity reduces the loading count but returns residual correlation toward the untransformed baseline.

\begin{table}[h]
\caption{Sparse PCA residual correlation at varying loading sparsity for the $p=18$ simulated correlation structure. Components are rescaled to satisfy $t_j^\top Rt_j=1$.}
\label{tab:sparse-pca-ablation}
\centering
\scriptsize
\begin{tabular}{@{}rrrr@{}}
\toprule
$\alpha$ & Nonzero loadings & Max $|\mathrm{offdiag}|$ & Mean $|\mathrm{offdiag}|$\\
\midrule
0.1 & 85 & 0.5639 & 0.1226\\
0.5 & 20 & 0.3742 & 0.1089\\
1.0--16.0 & 18 & 0.3742 & 0.1082\\
\botrule
\end{tabular}
\end{table}

The sparse PCA comparison is not support-identical because sparse PCA controls loading sparsity rather than the specific allowed-entry set $\Omega$. It therefore serves only as a secondary sparse-component benchmark. The ablation's primary result is narrower: restricting allowable mixing changes the feasible geometry and can materially increase residual dependence even when high source fidelity remains attainable.

\section{Discussion}\label{sec:discussion}

CORAL reframes interpretable decorrelation as selection within a constrained family of transformations rather than post hoc interpretation of a fixed rotation. Proposition~\ref{prop:exact-decorrelators} shows that exact decorrelation is highly non-unique: the full family is $R^{-1/2}O(p)$. The relevant question is therefore not whether PCA decorrelates, which it does, but which exact decorrelator best preserves the source-variable correspondence required by the application. The threshold $\rho_\star(R)$ formalizes the strongest common source fidelity compatible with exact decorrelation.

The computed bounds make the observed frontier transitions interpretable. At $p=6$ and $p=18$, a $0.95$ fidelity floor coexists with exact decorrelation. At $p=50$, $\rho_\star(R)$ lies tightly around $0.95$, and the primal residual at that target is only $0.0035$. WDI and Wine have lower thresholds, near $0.75$ and $0.83$, so residual correlation is unavoidable by $\rho_{\min}=0.85$. CORAL does not remove the fidelity--decorrelation trade-off; it identifies where that trade-off becomes unavoidable.

ZCA whitening provides a structural benchmark rather than merely an empirical one. ZCA is the exact decorrelator $T=R^{-1/2}$, corresponding to $Q=I$ in the family $R^{-1/2}O(p)$, and its source fidelities are the diagonal elements of $R^{1/2}$. Proposition~\ref{prop:rho-bounds} therefore gives
$
\rho_\star(R)\geq\min_j (R^{1/2})_{jj},
$
where the right-hand side is exactly ZCA's minimum source fidelity. ZCA thus supplies a constructive lower bound on the maximum fidelity attainable under exact decorrelation. Equality occurs when $Q=I$ is max--min optimal; otherwise another orthogonal factor improves the weakest source match. This statement concerns $\rho_\star(R)$, not every primal CORAL solution: when $\rho_{\min}<\rho_\star(R)$, multiple exact decorrelators may satisfy the declared floor, and the primal objective does not require the selected solution to maximize fidelity beyond that floor.

The PCA comparison addresses a different question. Eigenvector signs are arbitrary, and principal components have no intrinsic one-to-one correspondence with original variables. The bottleneck and sum-optimal assignments therefore give PCA the most favorable sign-invariant assignment for the fidelity measure being reported. Even under this favorable benchmark, PCA's best minimum source fidelity remains well below the common fidelity available from constructed exact decorrelators in all five systems. This does not imply that CORAL dominates PCA for dimension reduction. PCA remains appropriate when variance concentration or low-dimensional representation is the objective; CORAL addresses settings in which transformed coordinates must remain individually anchored to designated source variables. Unlike ZCA, PCA's assignment criterion is unrelated to the max--min optimization defining $\rho_\star(R)$, so no analogous ordering follows a priori.

The support-restriction ablation identifies a separate limitation. Restricting which variables may be mixed changes the feasible geometry by intersecting the unit-variance ellipsoids with coordinate subspaces. Proposition~\ref{prop:support-floor} identifies the condition under which a positive residual floor must exist, while the numerical results show that its apparent magnitude depends on whether aggregate squared correlation or the worst residual pair is optimized. The residual plateau observed under the restricted support should therefore not be interpreted as a universal algebraic constant. Exact support and an explicit residual criterion are required before a numerical floor has a defensible interpretation. Sparse PCA provides a secondary sparsity benchmark, but its loading sparsity is not support-identical to CORAL's declared pattern and should not be interpreted as a direct optimization of the same feasible set.

Several limitations remain. The max--min $\rho_\star$ and support problems are non-convex. Achieved $\rho_\star$ values provide constructive lower bounds, while the analytical upper bounds are rigorous, but the numerical optimizer does not guarantee global optimality. The simulations use three dimensions and one moderate-correlation generator; the $p=50$ case is a scalability check rather than a controlled dimension-effect experiment. The support graph uses a fixed correlation threshold, and the empirical applications remain dense. Near-singular $R$ is a genuine boundary condition because high source fidelity across nearly redundant variables may conflict with decorrelation.

Future work should sharpen the bounds on $\rho_\star(R)$, develop scalable alternatives to exhaustive subset enumeration, derive sufficient conditions for support-compatible exact decorrelation, and seek stronger global optimality guarantees. The anchoring framework could also be generalized from designated source coordinates to domain-specified basis vectors encoding expected direction and approximate relative magnitude, with optional sign or interval constraints on selected transformation coefficients. Larger simulations should vary spectra and correlation topology systematically, while empirical support and basis structures should arise from domain knowledge or inferential graph construction. Finally, a complementary minimax formulation could replace the aggregate squared-correlation objective with the maximum absolute residual correlation, allowing direct control of the worst remaining pairwise dependence and supporting a broader Pareto analysis of source fidelity and decorrelation. 

\section{Conclusion}\label{sec:conclusion}

CORAL targets a simple requirement: decorrelate while preserving each transformed coordinate's source identity. All square exact decorrelators are $R^{-1/2}Q$, $Q\in O(p)$, which defines $\rho_\star(R)$, the maximum common source fidelity compatible with exact decorrelation. Certified intervals explain the frontiers across three simulations and two real datasets. At $p=50$, an exact decorrelator retains minimum source fidelity $0.949077$ versus PCA's best bottleneck fidelity $0.1716$. Loss of source-variable identity is therefore not inherent to exact decorrelation.

The method also clarifies what changes when mixing is restricted. Exact support creates a separate constrained geometry, and residual-correlation claims must distinguish the aggregate squared objective from worst-pair control. CORAL therefore contributes more than a new rotation algorithm: it provides a framework for stating, solving, and theoretically interpreting the trade between source-variable fidelity, residual dependence, and admissible mixing structure.

\backmatter

\bmhead{Acknowledgements}
The authors thank colleagues and reviewers who provide feedback on the development of the CORAL framework. 

\section*{Declarations}

\bmhead{Funding}
Not applicable.

\bmhead{Competing interests}
The authors declare no competing interests.

\bmhead{Ethics approval and consent to participate}
Not applicable.

\bmhead{Consent for publication}
Not applicable.

\bmhead{Data availability}
The World Development Indicators are publicly available from the World Bank. The Wine dataset is publicly available through the UCI Machine Learning Repository. The simulated correlation structures are generated algorithmically as described in Section~\ref{sec:simulation}.

\bmhead{Materials availability}
Not applicable.

\bmhead{Code availability}
The CORAL implementation, code used to reproduce the simulation and empirical analyses, and the interactive Streamlit application are publicly available at \url{https://github.com/dustoff06/CORAL}.

\bmhead{Use of generative AI}
During the preparation of this manuscript, the authors used ChatGPT 5.5 (OpenAI) and Claude Sonnet 5 (Anthropic) to assist with programming and prose editing. All methodological decisions, analyses, results, interpretations, and final manuscript content were reviewed and approved by the authors, who assume full responsibility for the work.

\bmhead{Author contributions}
L.V.F. conceived the study, developed the methodology, implemented the primary analyses, interpreted the results, and drafted the manuscript. C.F. reviewed the mathematical formulation and theoretical results. A.S. reviewed the computational implementation and reproducibility of the code. A.T. reviewed the manuscript for clarity, structure, and presentation. All authors reviewed and approved the final manuscript.

\bibliography{references}
\clearpage
\noindent
\Large{\textbf{Appendices}}

\appendix

\section{Final-run numerical verification}
\label{app:final-verification}
The final numerical run repeated the dense, support-restricted, and empirical analyses using the same implementations reported in the main text. Table~\ref{tab:final-diagnostics} summarizes numerical constraint satisfaction at the most demanding fidelity level, $\rho_{\min}=0.99$. All reported hard-constrained solutions satisfy the declared fidelity floor to the solver tolerance. The support-restricted solutions also satisfy the declared zero pattern exactly by construction.

\begin{table}[h]
\caption{Numerical diagnostics from the final hard-constrained run at $\rho_{\min}=0.99$.}
\label{tab:final-diagnostics}
\centering
\begin{tabular}{@{}lrrrr@{}}
\toprule
System & Min fidelity & Max $|\mathrm{offdiag}|$ & Mean $|\mathrm{offdiag}|$ & Max violation\\
\midrule
Synthetic $p=6$  & 0.9899 & 0.1514 & --     & $9.51\times10^{-5}$\\
Synthetic $p=18$ & 0.9900 & 0.2045 & 0.0535 & $3.99\times10^{-5}$\\
Synthetic $p=50$ & 0.9899 & 0.1700 & 0.0406 & $8.13\times10^{-5}$\\
WDI               & 0.9900 & 0.8296 & --     & $1.62\times10^{-5}$\\
Wine              & 0.9900 & 0.7605 & --     & $1.21\times10^{-5}$\\
\bottomrule
\end{tabular}
\end{table}

The optimized exact-decorrelation searches produced the constructive lower bounds $\rho_\star(R)\geq0.971817$, $0.958528$, and $0.949077$ for the $p=6$, $p=18$, and $p=50$ simulated systems, respectively. Combined with the corresponding analytical upper bounds, the resulting intervals are $[0.971817,0.975660]$, $[0.958528,0.961614]$, and $[0.949077,0.950352]$. For WDI and Wine, exhaustive subset enumeration gives the tighter intervals $[0.750852,0.757853]$ and $[0.826231,0.835487]$, respectively. These intervals use the attained max--min solution as a constructive lower bound and the analytical result of Proposition~\ref{prop:rho-bounds} as the upper bound; global optimality of the numerical max--min search is not assumed.

For the $p=50$ scalability check, the five-point hard-primal frontier required $5.42$ seconds, while the 100-start max--min search for $\rho_\star(R)$ required $130.55$ seconds. Exhaustive subset enumeration was not attempted at $p=50$ because it would require examination of $2^{50}-1$ nonempty subsets; the reported upper bound therefore uses the full-set trace bound.

\section{Sparse PCA comparison}
\label{app:sparse-pca}
Sparse PCA provides a secondary benchmark for the $p=18$ support-restriction analysis. The sparsity parameter was selected to approximate the declared CORAL support density. At $\alpha=0.1$, sparse PCA retained 85 nonzero loadings, compared with 110 free coefficients under the declared CORAL support. After rescaling each component to satisfy $t_j^\top Rt_j=1$ and applying the same sign-invariant source assignment used for the PCA benchmark, sparse PCA attained minimum source fidelity $0.4337$ and mean source fidelity $0.8343$. Its maximum and mean absolute residual correlations were $0.5639$ and $0.1226$, respectively. Because sparse PCA controls total loading sparsity rather than the specific allowed-entry pattern $\Omega$, this comparison is descriptive rather than support-identical.

\begin{table}[h]
\caption{Sparse PCA benchmark matched approximately to the declared support density of the $p=18$ CORAL ablation.}
\label{tab:sparse-pca-app}
\centering
\begin{tabular}{@{}rrrrrr@{}}
\toprule
$\alpha$ & Nonzero loadings & Min fidelity & Mean fidelity & Max $|\mathrm{offdiag}|$ & Mean $|\mathrm{offdiag}|$\\
\midrule
0.1 & 85 & 0.4337 & 0.8343 & 0.5639 & 0.1226\\
\bottomrule
\end{tabular}
\end{table}

\end{document}